\theoremstyle{definition}
\theoremstyle{plain}
\newcolumntype{?}{!{\vrule width 2pt}}
\algnewcommand\algorithmicforeach{\textbf{for each}}
\begin{document}
%
%
\pagestyle{headings}  
%

%
%
\title{Solving Graph Isomorphism Problem for a Special case}
%
%
%
%
%
\author{
    \IEEEauthorblockN{Vaibhav Amit Patel}
    \IEEEauthorblockA{Dhirubhai Ambani Institute of Information and Communication Technology Gandhinagar, Gujarat, 382007 India,\\ vaibhav290797@gmail.com}
}
\newcommand\tab[1][0.6cm]{\hspace*{#1}}
\maketitle              

\begin{abstract}
Graph isomorphism is an important computer science problem. The problem for the general case is unknown to be in polynomial time. The bese algorithm for the general case works in quasi-polynomial time \cite{babai2016graph}. 
The solutions in polynomial time for some special type of classes are known. In this work, we have worked with a special type of graphs. We have proposed a method to represent these graphs and finding isomorphism between these graphs. The method uses a modified version of the degree list of a graph and neighbourhood degree list \cite{barrus2015neighborhood}. These special type of graphs have a property that neighbourhood degree list of any two immediate neighbours is different for every vertex.The representation becomes invariant to the order in which the node was selected for giving the representation making the isomorphism problem trivial for this case. The algorithm works in $O(n^4)$ time, where n is the number of vertices present in the graph. The proposed algorithm runs faster than quasi-polynomial time for the graphs used in the study.  
\IEEEkeywords{graph theory, graph isomorphism problem }
\end{abstract}
\section{Introduction}
Graph is a popular data structure and it can be used in many complex real world applications, such as social networks, networking. For example, two people on a social networking site a and b can be represented by a graph consisting nodes $v_a$ and $v_b$. Now, if they are friends, then their relationship can be represented by an edge e($v_a$,$v_b$) between them. Graph theory is very important in many other application. There are many open problems in the graph theory. Description of graph isomorphism problem is easy, but difficult to solve. Graph isomorphism: It is a bijection on vertex set of graph G and H that preserves edges. Graph isomorphism problem is a special case of subgraph isomorphism problem which is in NP-complete complexity class.
\par Checking whether two Graphs are isomorphic or not is an old and interesting computational problem. The simplest,but inefficient approach for checking two graphs G and H, each with vertices v is making all the permutation of the nodes and see if it is edge preserving bijection or not. Obviously, this solution takes $O(v^2v!)$ time, where $v$ is the number of vertices of the graph. Even for smaller values of $v$ the problem becomes intractable because of this enormous time complexity.

\par 
Babai \cite{babai2016graph} has shown the general case algorithm to be solved in quasi-polynomial time. While GI problem for a general case is a hard problem, many works have been done on different special types of graphs. These special types are important in graph theory and polynomial time solution is known for them.
Kelly et. al. \cite{kelly1957congruence} and Aho et. al \cite{aho1974design} have worked on isomorphism on trees and they have found a poolynomial time algorithm for this case. Hopcraft et. al. found an algorithm for solving GI on planner graphs \cite{hopcroft1974linear}. In fact, this problem is in log space.
Lueker et. al. have worked on interval grpahs \cite{lueker1979linear}. Colbourn et. al. found an algorithm to solve GI on permutation graphs \cite{colbourn1981testing}.
A lot of work has been done on the solving GI for bounded parameters of a graph.
Bodlaender et. al. have worked on graphs of bounded treewidth \cite{bodlaender1990polynomial}. 
Filotti et. al. have worked on graphs of bounded genus \cite{filotti1980polynomial}. 
L. Babai et. al. have worked on GI on Graphs with bounded eigenvalue multiplicity \cite{babai1982isomorphism}.
Miller et. al. have worked on k-Contractible GI problem \cite{miller1983isomorphism}.
Luks et. al. have worked on color-preserving isomorphism of colored graphs with bounded color multiplicity \cite{luks1986parallel}. Zager et. al. have worked on this problem using similarity functions \cite{zager2008graph}. 
In this work, first we will define the graphs that are used in this study. After that, we will discuss the algorithm for solving isomorphism for this type of graphs in polynomial time. A worked out example is also given along with the algorithm. 

\section{Algorithm}\label{sec:algorithm}

This is a simple example of GI problem. We have given two graphs $G_1$ and $G_2$. The problem is that we have to find whether these two graphs are isomorphic or not and if they are isomorphic then give the isomorphism. In practice, the the input is the adjacency matrices of these two graphs. Two isomorphic graphs have equal number of nodes, number of edges, degree sequence. But these properties are not enough to prove the isomorphism. Two graph with same degree sequence can be non-isomorphic. 

\begin{figure}[htb]
    \centering
    \begin{minipage}{0.45\textwidth}
        \centering
            \scalebox{.7}{ 
    \begin{tikzpicture}[auto, node distance=3cm, every loop/.style={},
                    thick,main node/.style={circle,draw,font=\sffamily\Large\bfseries}]

  \node[main node] (1) {1};
  \node[main node] (2) [below of=1] {2};
  \node[main node] (3) [below left of=2] {3};
  \node[main node] (4) [below  right of=2] {4};
  
  \node[main node] (5) [below  right of= 4] {5};
  \node[main node] (6) [below left of=4] {6};
  \node[main node] (7) [below left of=6] {7};

  \path[every node/.style={font=\sffamily\small}]
    (1) edge  node {} (2)
      
    (2) edge  node {} (3)
    edge  node {} (4)
    (3)  edge  node {} (4)

    (4) edge  node {} (5)
    edge  node {} (6)
    
    (5) 
    (6) edge  node {} (7)
    (7);
\end{tikzpicture}

}
\caption{Graph $G_1$} \label{fig:M1}

    \end{minipage}\hfill
    \begin{minipage}{0.45\textwidth}
        \centering
        \scalebox{.7}{ 
    
\begin{tikzpicture}[auto, node distance=3cm, every loop/.style={},
                    thick,main node/.style={circle,draw,font=\sffamily\Large\bfseries}]

  \node[main node] (1) {A};
  \node[main node] (2) [right of=1] {B};
  \node[main node] (3) [below of=1] {C};
  \node[main node] (4) [below  of=2] {D};
  
  \node[main node] (5) [below   of= 3] {E};
  \node[main node] (6) [below of=4] {F};
  \node[main node] (7) [below  of=5] {G};

  \path[every node/.style={font=\sffamily\small}]
    (1) edge  node {} (2)
      
    (2) 
    edge  node {} (4)
    (3)  edge  node {} (4)

    (4) edge  node {} (5)
    edge  node {} (6)
    
    (5) 
    (6) edge  node {} (7)
    edge  node {} (5)
    (7);
\end{tikzpicture}
}
\caption{Graph $G_2$}  \label{fig:M2}
    \end{minipage}
\end{figure}
\begin{figure}[htb]
\end{figure}
 Consider Fig. \ref{fig:M1} and Fig. \ref{fig:M2}. Brute-force  method is to try all the possibilities and to find the edge preserving bijection from vertex set of graph $G_1$ to vertex set of graph $G_2$. After trying all the possibilities one can find out that these two graphs are isomorphic. Using this brute-force method the isomorphism function $I: G_1  \rightarrow G_2$ is given in the Table \ref{tab:tab_F}. First, we will define the graphs that can be solved by the proposed method.
 Consider Fig. \ref{fig:M1} and Fig. \ref{fig:M2}. Brute-force  method is to try all the possibilities and to find the edge preserving bijection from vertex set of graph $G_1$ to vertex set of graph $G_2$. After trying all the possibilities one can find out that these two graphs are isomorphic. Using this brute-force method the isomorphism function $I: G_1  \rightarrow G_2$ is given in the Table \ref{tab:tab_F}. 

\textbf{Permissible graphs:}
 Let $G (  V,E)$\:be a simple, connected graph ,\\
\tab V : Set of vertices , $\left|V\right|$ = $n$ \\
\tab E : Set of edges (Ordered pair of vertices), \\
\tab $E \subset (V\times V)$, and for a given $v_i \in V $ and 
\\
$ S_i$  = $ { \{x \:|\: (x,v_i) \in E \}}$ , 
\: $\forall v_j , v_k \in S$,    $v_j \neq v_k  \implies dsv(v_j) \neq dsv(v_k)$ .\\
Where $dsv(p)$ is the neighbourhood degree list of a given vertex $p$ with the vertices.  For example, $dsv(4)=\{(5,1),(3,2),(6,2),(2,3)\}$. It is a list of tuples, where the first element is the vertex and second is its degree. The dsv is sorted by the degree (the second element).

\begin{figure}[htb]
    \centering
    \begin{minipage}{0.45\textwidth}
   
\captionof{table}{Isomorphism: $I(G_1,G_2)$} 

\begin{tabular}{|c|c|}

 \hline
 Vertex from graph $G_1$ & Vertex from graph $G_2$  \\ [0.7ex] 
\hline
 1 & G\\ 
 \hline
 2 & F\\ 
 \hline
 3 & E\\ 
 \hline
 4 & D\\ 
 \hline
 5 & C\\ 
 \hline
 6 & B\\ 
 \hline
 7 & A\\ 
 \hline
\end{tabular}
\label{tab:tab_F}

    \end{minipage}\hfill
    \begin{minipage}{0.45\textwidth}
        \centering
 \captionof{table}{{$dsv(i)$, for all $i \in V$ of the graph $G_{1}$ }}

     \begin{tabular}{|c||c|c|} 
\hline
{Vertex $i$}  &$degree(i)$ &$dsv(i)$ \\[0.7ex]
 
 \hline
\textbf{1} & 1&\{(2,3)\}  \\ 
\hline
\textbf{2} &3& \{(1,1),(3,2),(4,4)\}  \\ 
\hline
\textbf{3} &2& \{(2,3),(4,4)\}  \\ 
\hline
\textbf{4} &4& \{(5,1),(3,2),(6,2),(2,3)\}  \\ 
\hline
\textbf{5} &1& \{(4,4)\}  \\ 
\hline
\textbf{6} & 2&\{(7,1),(4,4)\}  \\ 
\hline
\textbf{7} & 1&\{(6,2)\}  \\ 
\hline

 \end{tabular}
    
    \label{tab:dsv1}

    \end{minipage}
\end{figure}
\par We will solve this problem using the proposed method. The input is two graphs $G_1$ and $G_2$ in their adjacency lists form, $list_1$,$list_2$ respectively. The algorithm will first check that whether these graphs follow the Definition of Permissible Graphs. or not. Table \ref{tab:list1} shows the adjacency list, $list_1$ of the graph $G_1$.
\begin{table}[htb]
  \caption{{$list_{1}$ of the graph $G_{1}$ }}
    \centering
     \begin{tabular}{|c||c|c|c|c|c|c|c|} 
\hline
\textbf{Vertex}  & list& list& list& list \\[0.7ex]
  & elements&elements&elements&elements\\[0.7ex]
 
 \hline
\textbf{1} & 2 & - & - & -  \\ 
 \hline
 \textbf{2} & 1 & 3 & 4 & -  \\ 
 \hline
\textbf{3} & 2 & 4 & - & -  \\ 
 \hline
\textbf{4} & 2 & 3 & 5 & 6  \\ 
 \hline
\textbf{5} & 4 & - & - & -  \\ 
 \hline
\textbf{6} & 4 & 7 & - & -  \\ 
 \hline
\textbf{7} & 6 & - & - & -  \\ 
 \hline

 \end{tabular}
    
    \label{tab:list1}
\end{table}
This algorithm consists of 3 sub modules:
\begin{enumerate}
\item  Preprocessing
\item Checking the input
\item Generate UIDs
\item Map Isomorphism
\end{enumerate}
\subsection{Preprocessing}
We will process on both the graphs and store the results for later use. Here $deg\_seq_i$ is list of degree sequences of the neighbours of the vertex $i$. This step will be performed on every vertex $i$, $i \in V $ for both the graphs. For understanding, $deg\_seq_4$ is shown in the Table \ref{tab:ds1}.    
\begin{algorithm}[htb] \caption{Preprocessiong}
\begin{algorithmic}[1]
\State ${deg\_seq_i} \gets 0$\;
\State $S$  = $ { \{x \:|\: (x,i) \in E \}}$
\ForEach {$v \in S$}
\State ${deg\_seq_i.append(dsv(v))}$ 
\EndFor
\end{algorithmic}
\end{algorithm}

\begin{table}[htb]
  \caption{degree sequence list, $deg\_seq_4$ of the graph $G_1$}
  
    \centering
 \begin{tabular}{|c |c|} 
 \hline
 vertex v&$dsv(v)$ of the neighbours v of the vertex 4 \\ [0.7ex] 
 \hline
 2&\{(1,1),(3,2),(4,4)\} \\ 
 \hline
 3&\{(2,3),(4,4)\} \\ 
 \hline
 6&\{(7,1),(4,4)\} \\ 
 \hline
 5&\{(4,4)\} \\ 
 \hline
\end{tabular}
    \label{tab:ds1}
\end{table}

\subsection{Checking the input}
In this module the algorithm will check whether the given graphs fits in the above definition or not. 
\begin{algorithm}\caption{Whether this method is applicable or not}
\begin{algorithmic}[1]
\State Sort the lists in $deg\_seq_i$ by the length of the lists and then lexicographically.
\State Sort $dsv(i)$ according to the order of lists in the $deg\_seq_i$.
\State $result \gets \textsf{true}$\;
\ForEach {$i \in V$}
\ForEach {$x,y \in deg\_seq_i$}
\If  {x=y}
\State $result \gets \textsf{false}$\;
\EndIf
\EndFor    
\EndFor    
\end{algorithmic}
\end{algorithm}
After getting the $deg\_seq_i$, for all the $i\in V$, the algorithm sorts the elements of this list. In the step 1. of the Algorithm 2, lists in the $deg\_seq_i$ are sorted by lexicographically (according to the second element in the tuple) and then by the length of the list. Now, we will perform this step on Vertices of the graph $G_1$.  
\begin{table*}[htb]
    \caption{sorted degree sequence list, $deg\_seq_i$ for $i \in V$of the graph $G_1$}
    \centering
 \begin{tabular}{|c |c|c|c|} 
 \hline
 vertex i&$deg\_seq_i$ & changed&remarks\\ [0.7ex] 
& &order of $dsv(i)$&\\ [0.7ex] 
 \hline
  1    & [ \{(1,1),(3,2),(4,4)\}] & \{(2,3)\}  &-\\ 
 \hline
  2    & [ \{(2,3)\}, \{(2,3),(4,4)\}, \{(5,1),(3,2),(6,2),(2,3)\} ] & \{(1,1),(3,2),(4,4)\}  &-\\ 
 \hline
  3    & [ \{(1,1),(3,2),(4,4)\}, \{(5,1),(3,2),(6,2),(2,3)\}] & \{(2,3),(4,4)\}  &-\\ 
 \hline
  4    & [ \{(4,4)\}, \{(2,3),(4,4)\}, \{(7,1),(4,4)\}, \{(1,1),(3,2),(4,4)\}] & \{(5,1),(6,2),(3,2),(2,3)\}  & changed\\ 
 \hline
  5    & [ \{(5,1),(3,2),(6,2),(2,3)\}] & \{(4,4)\}  &-\\ 
 \hline
  6    & [ \{(6,2)\}, \{(5,1),(3,2),(6,2),(2,3)\}] & \{(7,1),(4,4)\}  &-\\ 
 \hline
  7    & [ \{(7,1),(4,4)\}] & \{(6,2)\}  &-\\ 
 \hline

\end{tabular}
\label{tab:ds2}
\end{table*}

If value of the result is true then we can apply our algorithm. And it can go thorough out next stage.
\subsection{Generate UIDs}
Here we will introduce an encoding technique, in which we will assign different ids to all the vertices. And after that we will compare this ids in both the graphs among all the vertices. These unique ids (UID) are independent of the order in which we calculated these UIDs. 
\begin{algorithm}\caption{generate\_UID (count, i)}
\begin{algorithmic}[1]
\State tmp1=$dsv(i)[0]$\;
\While {$counter> 0$}
\State tmp2=[]
\ForEach{$x \in tmp1$}
\If  {$bool[x]\:=\:0$}
\State      $UID.append(dsv(x))$
\State      $tmp2.append(dsv(x)[0])$
\State    $count \gets count\: - \:1$\;
\Else
\State      $UID.append((x,-1))$
\EndIf
\EndFor
\State      $UID.append((-2,-2))$
\EndWhile    
\end{algorithmic}
\end{algorithm}
THe bool array is of size $|V|$ and it stores whether all the vertex are included or not. The array is initialized to 0 for all the elements. The count variable is initialized with total number of vertices, $|V|$. It will decrement its value by one as the algorithm explores the dsv of all the vertices. Function $generate\_UID$ is called for each vertex of graphs G1 and G2. This function takes count and the vertex i as its input. Bool array element will change its value to one, if that vertex is visited. The function will be called for every vertex $i$ with initial condition set to generate\_UID ($n$, $i$).  
Now, we will perform this operation on the vertex 4 of Graph $G_1$. 
\begin{table*}
    \caption{UID of vertex 4 of Graph $G_1$}
    \centering
 \begin{tabular}{|c||c|c|c|c|c|c|c|c|c|c|c|c|c|c|c|c|c|c|c|c|c|c|c|c|c|c|c|c|c|c|c|c|c|c|c|c|c|c|} 
 \hline
\textbf{vertex}&4&-2&5&6&3&2&-2&4 &7 &4 &2 &4 & 1& 3& 4&-2 &4 &6 &4 &2 &4 &2&3&4&-2  \\
\hline
\textbf{degree}&4&-2&1&2&2&3&-2&4 &1 &4 &3 & 4&1 &2 &4 &-2 &-1 &2 &-1 &-1 &-1 &3&-1&-1&-2 \\ 
 \hline

\end{tabular}
\label{tab:UID1}
\end{table*}

Now, we know from Table \ref{tab:tab_F} that vertex D from graph $G_2$ is isomorphic to the vertex 4 from graph $G_1$.
\begin{table*}
    \caption{UID of vertex D of Graph $G_2$}
    \centering
 \begin{tabular}{|c||c|c|c|c|c|c|c|c|c|c|c|c|c|c|c|c|c|c|c|c|c|c|c|c|c|c|c|c|c|c|c|c|c|c|c|c|c|c|} 
 \hline
\textbf{vertex}&D&-2&C&B&E&F&-2&D &A &D &F &D & G& E& D&-2 &D &6 &D &F &D &F&E&D&-2  \\
\hline
\textbf{degree}&D&-2&1&2&2&3&-2&4 &1 &4 &3 & 4&1 &2 &4 &-2 &-1 &2 &-1 &-1 &-1 &3&-1&-1&-2 \\ 
 \hline

\end{tabular}
\label{tab:UID2}
\end{table*}

\subsection{Map Isomorphism}
After generating UIDs from both the graph $G_1$ and $G_2$ the algorithm will try to map the UIDs of graph $G_1$ to UIDs of Graph $G_2$ and will report the isomorphism I. Here, $UID^{p}_j$ is UID of graph $p$ and of $j^{th}$ vertex. $UID^{p}$ is the list of all the UIDs of all the vertex of the graph $p$. The map Isomorphism function will be called for UID of every vertex of the graph $G_1$. The $iso$ is and array which maps vertex of one graph to the vertex of the other graph. If for all the UIDs the value of flag is 1 and all the $iso$ arrays give the same result then the two graphs are isomorphic and the isomorphism function is given by $iso$. The algorithm uses the constraint given in the Definition of Permissible Graphs., which is why the algorithm revolves around the degree sequence of the vertices.

\begin{algorithm}
\caption{map Isomorphism($UID^{1}_i$)}
\begin{algorithmic}[1]
\ForEach{$UID^{2}_j \in UID^{2}$}
\State      $iso\gets -1$
\State      $ind\gets 0$
\State      $flag\gets 0$
\If {$len(UID^{1}_{i}) =len(UID^{2}_{j})$ }
\While{$ind<len(UID^{1}_{i})$}
\State      $ind \gets ind + 1$
\If  {$iso[UID^{1}_{i}[ind][0]]\neq -1$ and $iso[UID^{1}_{i}[ind][0]]  \neq UID^{2}_{j}[ind][0]$}
\State      $flag\gets0$
\State      break
\EndIf
\If  {$UID^{1}_{i}[ind][1]\:\neq \:UID^{2}_{j}[ind][1]$}
\State      $flag\gets0$
\State      break
\Else 
\State $iso[UID^{1}_{i}[ind][0]]   \gets UID^{2}_{j}[ind][0]$
\State $flag\gets1$
\EndIf
\EndWhile
\If {flag=1}
\State break
\EndIf
\EndIf
\EndFor
\end{algorithmic}
\end{algorithm}

\begin{restatable}{theorem}{thtwo}
\label{theorem2}
Any vertex of a permissible graphs can not have more than one UID.
\end{restatable}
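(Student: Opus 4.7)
The plan is to argue that, once the starting vertex $i$ and the graph are fixed, every choice the procedure $generate\_UID$ could make is forced by the permissible property, so only a single UID can be produced. First I would recall that in a permissible graph the neighbours of any vertex $v$ have pairwise distinct values of $dsv$. Consequently the sort in Algorithm 2, which orders adjacency information by length and then lexicographically by degree, yields a total order on the neighbours of $v$ with no ties ever encountered; tie-breaking conventions therefore never affect the outcome, and the canonical ordering of the neighbours of $v$ is an invariant of the graph alone.

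Next I would analyse Algorithm 3 as a breadth-first expansion from $dsv(i)[0]$. The key object to track is the frontier $tmp1$, together with the block of entries appended to UID during one iteration of the while loop and terminated by the separator $(-2,-2)$. I would argue by induction on the iteration index that both the multiset and the order of $tmp1$, and hence the block written in that iteration, are functions only of $(G,i)$. The base case is the uniqueness of the canonical order on the neighbours of $i$ established above. For the inductive step, when the loop processes a vertex $x$ in the frontier, it enumerates $dsv(x)$; by the permissible property applied at $x$, the neighbours of $x$ are again pairwise distinguished by their $dsv$ values, so the enumeration order inside $dsv(x)$ is canonical, and the resulting entries, whether the full $dsv(y)$ for an unvisited $y$ or the marker $(y,-1)$ for a visited one, are appended in a position determined solely by the canonical order.

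Because the visited flag $bool[x]$ and the counter are updated deterministically, the termination condition $count=0$ is reached after a number of iterations that depends only on $(G,i)$, and the concatenated output sequence is therefore a function only of $(G,i)$. Since the theorem asks exactly that no vertex admit two distinct UIDs, this determinism suffices.

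The main obstacle I anticipate is not conceptual but a careful justification that the permissible hypothesis, which constrains only the immediate neighbourhood of each vertex, actually rules out every ambiguity in the algorithm. Specifically, I would need to check that whenever the sort in Algorithm 2 compares two $dsv$ lists that could otherwise tie, those lists belong to neighbours of a common vertex and are therefore distinct; and that the recursive expansion in Algorithm 3 only ever sorts neighbours of a single vertex at a time, so no comparison of $dsv$ lists across unrelated vertices is needed. Once these two points are verified, determinism, and hence uniqueness of the UID, follows immediately.
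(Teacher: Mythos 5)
Your proposal is correct and is, in substance, the argument this theorem needs; it also goes considerably further than what the paper actually writes down. The paper's own proof is only a fragment: it records the first few forced entries of the UID ($UID_n[1]=v$ with $UID_d[1]=deg(v)$, the separator $(-2,-2)$, then the $deg(v)$ entries coming from the neighbourhood list) and then breaks off mid-sentence at the clause describing what lies between two consecutive separators. In other words, the paper gestures at exactly your block-by-block decomposition of the UID but never supplies the inductive step showing that each block is determined by the previous one. Your proof supplies precisely that missing step: the permissible condition, being universally quantified over vertices, removes every tie in the (length, then lexicographic-by-degree) sort of a vertex's neighbours, so the frontier order at each level of the breadth-first expansion is forced by the frontier order at the previous level, and the concatenated output is a function of $(G,i)$ alone. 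The one point you flag but should nail down is the reading of the hypothesis $dsv(v_j)\neq dsv(v_k)$: as literally defined, $dsv$ carries vertex labels, under which distinct neighbours almost always have distinct lists while the sort (which compares only degrees) could still tie. The inequality must be read on the degree sequences alone, which is what the equality test in Algorithm~2 actually checks; with that reading your tie-free claim coincides exactly with the permissible condition and the determinism argument closes. Modulo that clarification (and reading past the pseudocode's omission of the updates $bool[x]\gets 1$ and $tmp1\gets tmp2$, which the surrounding text clearly intends), your argument is sound and is the proof the paper should have given.
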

\begin{proof}
The UID consist of $UID_d$ (degrees) and $UID_n$ (nodes). 
\begin{itemize}
\item $UID_n[1]=v$ and $UID_d[1] = deg(v)$  .
\item $UID_n[2]=-2$ and $UID_d[2] = -2$
\item The next $deg(v)$ elements will be of $dsn(v)$.   

\item Between any two $i$ and $j$ such that, $UID_n[i]=-2$, $UID_d[i]=-2$ and $UID_n[j]=-2$, $UID_d[j]=-2$ and there is no k, where $i<k<j$, such that $UID_n[k]=-2$, $UID_d[k]=-2$:

\end{itemize}
\end{proof}

\begin{restatable}{theorem}{ththree}
\label{theorem3}
The graphs described in Definition of Permissible Graphs contains non-planar graphs. 
\end{restatable}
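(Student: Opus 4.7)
The plan is to prove Theorem~\ref{theorem3} constructively by exhibiting an explicit non-planar graph and verifying the permissibility definition on it. The cleanest candidate is the Petersen graph $P$, which is 3-regular on ten vertices, has girth 5, and is well known to be non-planar (it contains a subdivision of $K_{3,3}$, equivalently a $K_5$-minor), so the ``non-planar'' half of the claim is immediate and can simply be cited.

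The first verification step is to fix an arbitrary vertex $w \in V(P)$ and let its three neighbors be $u_1, u_2, u_3$. Because $P$ is 3-regular, each $u_i$ has neighborhood $\{w, x_i, y_i\}$ for some pair of vertices $x_i, y_i$, and so
\[
    dsv(u_i) = \{(w,3),\,(x_i,3),\,(y_i,3)\}.
\]

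The key step is to invoke the girth-5 property to show that the pairs $\{x_i, y_i\}$ are pairwise disjoint across $i = 1, 2, 3$: if some $u_i$ and $u_j$ shared a second common neighbor $z \neq w$, then $w\,u_i\,z\,u_j\,w$ would be a 4-cycle, contradicting girth 5. Consequently $dsv(u_1), dsv(u_2), dsv(u_3)$ differ from one another in at least two of their three (vertex, degree) entries and are therefore pairwise distinct. Because $w$ was arbitrary, every vertex of $P$ has neighbors with pairwise distinct $dsv$s, so $P$ is a permissible graph; combined with its non-planarity, this proves Theorem~\ref{theorem3}.

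The main obstacle is almost entirely interpretive: one must confirm that $dsv(u) = dsv(v)$ really is equivalent to $u$ and $v$ having the same neighborhood as a vertex set, so that the combinatorial girth argument translates cleanly into the set-equality language used in the definition. Once this is clarified, the Petersen verification is nearly immediate; a more heavy-handed approach starting from $K_{3,3}$ or $K_5$ directly would force us to attach ad hoc pendant gadgets to break twin vertices, and one would then have to check that those gadgets do not create new permissibility violations, a messiness that the Petersen choice neatly avoids.
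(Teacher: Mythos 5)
Your overall strategy---exhibit one explicit non-planar graph and verify permissibility on it---is exactly the paper's strategy (the paper uses an irregular $8$-vertex graph with $19$ edges, non-planar because $e > 3v-6$). The problem is your choice of witness. You flag yourself that everything hinges on how $dsv(u)=dsv(v)$ is to be read, and you resolve that ambiguity in the direction that makes the theorem easy but the rest of the paper meaningless. The evidence points the other way: in the worked example the neighbours $2,3,6,5$ of vertex $4$ are separated because their neighbourhood \emph{degree} multisets $(1,2,4)$, $(3,4)$, $(1,4)$, $(4)$ all differ; Algorithm 4 compares only $UID[\cdot][1]$, the degree coordinate, when matching vertices across the two graphs; and the reported fraction of permissible graphs ($10.66\%$ at $n=9$) is far too small for a condition as weak as ``no two neighbours of a common vertex are non-adjacent twins.'' So $dsv(v_j)\neq dsv(v_k)$ must mean that the label-invariant multisets of neighbour degrees differ, not that the neighbourhoods differ as vertex sets.

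Under that reading the Petersen graph fails immediately: it is $3$-regular, so every vertex has neighbourhood degree list $\{3,3,3\}$, and any vertex with two distinct neighbours already violates the permissibility condition. The same holds for every $k$-regular graph with $k\ge 2$, so your girth-$5$ common-neighbour argument, while internally correct, establishes the wrong property. To repair the proof you need an \emph{irregular} non-planar witness; the paper's $G_3$ (degree sequence $7,6,6,5,4,4,3,3$ on $8$ vertices, with $19 > 3\cdot 8-6=18$ edges forcing non-planarity) is one such, and checking that the neighbours of each of its vertices have pairwise distinct neighbourhood degree multisets is a finite verification of the kind you intended to perform on the Petersen graph.
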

\begin{proof}
We will prove it by example.
\begin{figure}[htb]
\centering
    \scalebox{.7}{ 
    \begin{tikzpicture}[auto, node distance=3cm, every loop/.style={},
                    thick,main node/.style={circle,draw,font=\sffamily\Large\bfseries}]

  \node[main node] (1) {1};
  \node[main node] (2) [right of=1]{2};

  \node[main node] (3) [right of=2]{3};
  \node[main node] (4) [below of=1]{4};

  \node[main node] (5) [right of=4]{5};
  \node[main node] (6) [right of=5]{6};

  \node[main node] (7) [below of=5]{7};
  \node[main node] (8) [right of=7]{8};

  \path[every node/.style={font=\sffamily\small}]

    (1) edge  node {} (2)
    edge  node {} (6)
    edge  node {} (8)
    edge  node {} (5)
    edge  node {} (7)
    edge  node {} (3)
      
    (2) edge  node {} (1)
    edge  node {} (8)
    edge  node {} (5)
    edge  node {} (7)
    
    (3)  edge  node {} (8)
    edge  node {} (1)
    edge  node {} (5)
    edge  node {} (4)
    edge  node {} (7)

    (4) edge  node {} (7)
    edge  node {} (5)    
    edge  node {} (3)
    (5)     edge  node {} (1)
         edge  node {} (2)
         edge  node {} (3)
         edge  node {} (4)
         edge  node {} (6)
         edge  node {} (7)
         edge  node {} (8)
    
    (6) edge  node {} (5)
         edge  node {} (1)
         edge  node {} (7)

    (7)  edge  node {} (1)
         edge  node {} (2)
         edge  node {} (3)
         edge  node {} (4)
         edge  node {} (5)
         edge  node {} (6)
    (8)
         edge  node {} (2)
         edge  node {} (1)
         edge  node {} (5)
         edge  node {} (3)
    
    ;
\end{tikzpicture}

}
\caption{Graph $G_3$} \label{fig:np}

\end{figure}
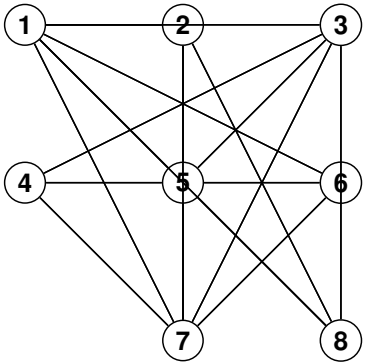
Consider Fig. \ref{fig:np}, number of vertices is 8 and number of edges is 19. The graph has a triangle in it. According to Euler's formula \cite{west2001introduction} this graph is non-planar graph ($3v-6 <= e$). Once can easily verify that this graph can also be solved using the proposed method.

\end{proof}

\section{Time complexity and practical results}
\label{sec:complexity}
\begin{enumerate}
\item The Prerprocessing step : 
Degree of each vertex can be found in $O(n^2)$ and after that their degree sequence can be calculated in $O(n^3)$. Overall, this step takes $O(n^3)$ time.
\item   Checking the input: 
Time omplexity $O(n^4)$.
\item Generate UIDs : 
The maximum length of any UID is of the order of the number of $O(n^{2})$. This function will be called n times, making this step run in $O(n^{3})$ time.

\item Try to fit Isomorphism:
For finding the isomorphism, the Algorithm 4 will compare UID of one graph with all the UIDs of the other graph. This is total $n \times n^{2}$ comparisons. This is $O(n^{3})$. The algorithm will be called $n$ times. So, the overall complexity becomes $n \times n^{3}$ making the complexity $O(n^{4})$. Overall, the algorithm runs in $O(n^4)$ time.
\end{enumerate}
The code was written in python and c++. In the Table \ref{tab:tab_pr} the number of permissible graphs for number of vertices ranging from 1 to 9 is given. We generated these graphs using SAGE \cite{sagemath} and nauty \cite{mckay1981practical}. For $n=9$, $10.66\%$ of the graphs lie under the Definition of Permissible Graphs. There are only $0.0171\%$ of graphs are tree for $n=9$. The Theorem \ref{theorem3} states that there are non-planar graphs, which can also be solved by the proposed method. 
\begin{table}
    \caption{Practical results}

    \centering
     \begin{tabular}{|c|c|c|c|c|c|c|} 
\hline
 Number  & Total& Connected&Trees&Planar  & Permissible& Fraction\\[0.7ex]
 of vertices & graphs& graphs& &graphs&  graphs & \\[0.7ex]
 
 \hline
1 & 1 & 1&1 &1& 1 &1 (trivial) \\ 
 \hline
2 & 2 & 1&1&2 & 1&0.5 \\ 
 \hline
3 & 4 & 2&1&4&0 &0\\ 
 \hline
4 & 11 & 6&2  &11  &1 &0.0909 \\ 
 \hline
5 & 34 &21&3& 33 &0&0 \\ 
 \hline
6 & 156 & 112&6&142 &6&0.038 \\ 
 \hline
7 & 1044 & 853&11 &822& 62 &0.05938 \\ 
 \hline
8 & 12346 &11117&23 & 6966& 1024 & 0.0829
 \\ 
 \hline
9 & 274668 & 261080&47& 79853&    29285&0.1066\\
 \hline
 \end{tabular}
    \label{tab:tab_pr}
\end{table}

\section{Conclusion}
The proposed algorithm solves GI problem for a special case. Before running any slower algorithm it is better to check the graphs using this algorithm. The time complexity is $O(n^{4})$, but these bounds are loose bounds. Because generating UID and matching the UIDs can be done faster, if we have used fast methods to sort them using the property that every value is bounded by $v$, making the bounds tighter. More importantly, this is polynomial time solution for this type of graphs.This algorithm is embarrassingly parallel making it useful in practical purposes.
\section*{Acknowledgment}
The authors would like to thank Dhruv Patel for helping in writing the software in python.
\medskip
\nocite{*}
\bibliographystyle{unsrt}

\end{document}